\renewcommand*\env@matrix[1][*\c@MaxMatrixCols c]{%
  \hskip -\arraycolsep
  \let\@ifnextchar\new@ifnextchar
  \array{#1}}
\title{On Cartesian line sampling with anisotropic total variation regularization}
\author{Clarice Poon  \footnote{CEREMADE, Universit\'e Paris-Dauphine} \footnote{Email:\texttt{cmhsp2@cam.ac.uk}}}
\begin{document}
\maketitle

\begin{abstract}
This paper considers the use of the anisotropic total variation seminorm to recover a two dimensional vector $x\in \bbC^{N\times N}$ from its partial Fourier coefficients, sampled along Cartesian lines. We prove that if $(x_{k,j} - x_{k-1,j})_{k,j}$ has at most $s_1$ nonzero coefficients in each  column and $(x_{k,j} - x_{k,j-1})_{k,j}$ has at most $s_2$ nonzero coefficients in each row, then, up to multiplication by $\log$ factors, one can exactly recover $x$ by sampling along $s_1$ horizontal lines of its Fourier coefficients and along $s_2$ vertical lines of its Fourier coefficients. Finally, unlike standard compressed sensing estimates, the $\log$ factors involved are dependent on the separation distance between the nonzero entries in each row/column of the gradient of $x$ and not on $N^2$, the ambient dimension of $x$.
\end{abstract}

\section{Introduction}
Research into compressed sensing has resulted in several examples in which one can recover an $s$-sparse vector of length $N$ from $\ord{s \log N}$ randomly chosen linear measurements. One of the first  examples of this, by Cand\`{e}s, Romberg and Tao, is the recovery of a gradient sparse vector from samples of its Fourier transform by means of solving a total variation regularization problem. This is perhaps one of the most well known and influential results in compressed sensing because of its links with applications, in particular, this result motivated the use of total variation regularization to reduce the sampling cardinality in many imaging applications, such as Electron Microscopy \cite{leary2013compressed}, Magnetic Resonance Imaging (MRI) \cite{Lustig}, Optical Deflectometric Tomography \cite{gonzalez2014compressive}, Phase-Contrast Tomography \cite{cong2012differential} and  Radio Interferometry \cite{wiaux2009compressed}. However, while studies into uniformly random sampling provide some insight into how total variation regularization can allow one to subsample the Fourier transform, there are two further aspects that one should consider.

\paragraph{1. Dense sampling at low frequencies and sparsity structure.}

 It was observed in \cite{Lustig, lustig2007sparse} that one can obtain far superior results via \textit{variable density sampling} where one samples more densely at low frequencies. This effect is demonstrated in Figure \ref{fig:boat}, where we compare the reconstruction of the boat test image from 12.3\% of its Fourier coefficients via different sampling maps. On the theoretical side, one particular type of variable density sampling was first studied by Krahmer and Ward in \cite{krahmer2014stable} and later in \cite{tv_poon}. The analysis of \cite{tv_poon} showed that compared with sampling uniformly at random, one of the advantages offered by sampling more densely at low frequencies is improved robustness to inexact sparsity and noise. However, an important reason for the effectiveness of sampling densely at low frequencies is that although the sampling cardinality of $\ord{s \log N}$ is optimal for the recovery of $s$-sparse vectors, one can further reduce this sampling cardinality by placing a structure assumption on the vector to be recovered. This observation was made by Cand\`{e}s and Fernandez-Granda \cite{candes2014towards} in the context of recovering a superposition of diracs in super-resolution, and in the case of total variation regularization of one dimensional signals, by exploiting the results of \cite{candes2014towards} and \cite{tang2012compressive}, the following result was proved in \cite{tv_poon}:

\begin{theorem}\label{thm:1d}
Let $N\in\bbN$, let $\epsilon\in [0,1]$ and let $M\in\bbN$ be such that $N/4 \geq M\geq 10$. Let $A$ be the discrete Fourier transform on $\bbC^N$ (defined in Section \ref{sec:notation}).
\begin{itemize}
\item Let $x\in\bbC^N$ and  $\Delta \subset\br{1,\ldots, N}$ be of cardinality $s$ and suppose that $$\min_{k,j\in\Delta, k\neq j}\frac{\abs{k-j}}{N} \geq \frac{2}{M}.$$
\item  Let  $\Omega = \{0\}\cup\Omega'$ where $\Omega'\subset \br{-M,\ldots, M}$ consist of  $m$ indices chosen uniformly at random with
$$
m \gtrsim \max\br{ \log^2\left(\frac{M}{\epsilon}\right), \, s\cdot \log \left(\frac{s}{\epsilon}\right)\cdot \log\left(\frac{M}{\epsilon}\right)
}.$$
\end{itemize}

Then with probability exceeding $1-\epsilon$, given $y = P_\Omega A x + \eta$ and $\norm{\eta}_2\leq \delta\cdot\sqrt{m}$, any solution $\xi$ to 
\begin{equation}
\min_{x\in\bbC^N} \norm{x}_{TV} \text{ subject to } \norm{P_\Omega A x - y}_2\leq \delta\cdot \sqrt{m}
\end{equation} satisfies
$$
\frac{\norm{x-\xi}_2}{\sqrt{N}}
\lesssim \frac{N^2}{ M^2} \cdot \left(  \delta \cdot s+ \sqrt{s}\cdot \norm{P_{\Delta^c} D x}_1\right).
$$
If $m=2M+1$, then the error bound holds with probability 1.
\end{theorem}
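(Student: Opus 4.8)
The plan is to reduce the total variation problem to a sparse recovery problem for the discrete gradient and then apply a discrete counterpart of the super\mbox{-}resolution analysis of \cite{candes2014towards} (for the full low\mbox{-}frequency band $\Omega' = \br{-M,\dots,M}$) and of \cite{tang2012compressive} (for random subsampling). First I would use $\norm{x}_{TV} = \norm{Dx}_1$ together with the fact that $A$ diagonalises the difference operator: writing $z = Dx$, one has $(Ax)_\omega = c_\omega\,(Az)_\omega$ for $\omega\ne 0$ with an explicit nonzero $c_\omega$ obeying $\abs{c_\omega}\asymp\abs{\omega}/N$ on $0<\abs\omega\le M\le N/4$, while $(Ax)_0$ records the mean of $x$. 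Thus recovering $x$ is equivalent to recovering the mean\mbox{-}zero $s$\mbox{-}sparse vector $z$, supported on $\Delta$ with pairwise separation $\ge 2N/M$, from the weighted samples $\{\,c_\omega (Az)_\omega : \omega\in\Omega'\,\}$, after which $x$ is reconstructed by discrete summation plus the mean.

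The core of the argument is a dual certificate for this weighted problem: a trigonometric polynomial $q$ of degree at most $M$ whose coefficients are supported on $\Omega'$ (after division by $\overline{c_\omega}$), with $q(k)=\mathrm{sgn}(z_k)$ on $\Delta$ and $\abs{q(k)}\le 1-\gamma$ off $\Delta$ for an absolute constant $\gamma>0$. When $\Omega'=\br{-M,\dots,M}$ this is precisely the Cand\`es--Fernandez\mbox{-}Granda interpolation polynomial built from the squared Fej\'er kernel and its derivative, whose solvability is guaranteed by the separation hypothesis; this yields exact recovery in the noiseless case and hence the probability\mbox{-}$1$ statement. In the randomly subsampled regime I would construct an \emph{inexact} certificate by Gross's golfing scheme: partition $\Omega'$ into independent batches and successively cancel the interpolation residual on $\Delta$, controlling each step by vector/matrix Bernstein inequalities. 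The key point is that the relevant coherence and variance parameters of the measurement ensemble restricted to a neighbourhood of $\Delta$ scale with the number of resolution cells at scale $1/M$ rather than with $N$; this is what produces the sample bound $m\gtrsim\max\{\log^2(M/\epsilon),\,s\log(s/\epsilon)\log(M/\epsilon)\}$ and the dependence of the $\log$ factors on $M$ (and on the separation of $\Delta$) in place of $N$.

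The certificate is then turned into the stated error bound by the standard convex\mbox{-}duality estimate for noisy $\ell_1$\mbox{-}type recovery: the existence of the (inexact) certificate, together with an approximate restricted\mbox{-}isometry bound for the weighted sampling operator on vectors supported near $\Delta$, bounds $\norm{D\xi - Dx}$ by the noise level measured in the $z$\mbox{-}domain plus $\norm{P_{\Delta^c}Dx}_1$. A careful accounting of the attenuation factors $c_\omega$ (which relate the samples of $x$ to those of $Dx$) and of the discrete summation that recovers $x$ from its gradient contributes the prefactor $N^2/M^2$, and combining this with the usual $s$\mbox{-} and $\sqrt s$\mbox{-}factors of the $\ell_1$ bounds gives $\tfrac{\norm{x-\xi}_2}{\sqrt N}\lesssim\tfrac{N^2}{M^2}\bigl(\delta s + \sqrt s\,\norm{P_{\Delta^c}Dx}_1\bigr)$.

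I expect the main obstacle to be the construction of the inexact dual certificate from the structured low\mbox{-}frequency random samples: making the golfing iterations contract with a Fej\'er/Vandermonde\mbox{-}type ensemble rather than an i.i.d.\ one, and isolating precisely why the failure probabilities, and hence the logarithmic factors, are governed by the effective resolution $M$ and the separation of $\Delta$ rather than by the ambient dimension $N$. A secondary technical difficulty is the bookkeeping of the weights $c_\omega$ and of the integration step, so that exactly $N^2/M^2$, and nothing larger, survives in the final estimate.
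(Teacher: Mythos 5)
Your outline is essentially the route the paper takes: Theorem \ref{thm:1d} is not proved in this paper at all but is imported from \cite{tv_poon}, which (as the introduction states) obtains it exactly by combining the super-resolution dual certificate of \cite{candes2014towards} for the full band $\br{-M,\dots,M}$ with the random-subsampling certificate construction of \cite{tang2012compressive}, and then running the standard certificate-plus-injectivity error analysis that this paper reproduces in two dimensions as Proposition \ref{prop:dual} and the quoted Lemma 4.25 of \cite{tv_poon}. The only point worth flagging is the origin of the $N^2/M^2$ factor: in the paper's machinery it enters through the certificate's off-support bound $c(M)=\max\{0.99993,\,1-0.92(M^2-1)/N^2\}$, i.e.\ through $(1-c_2)^{-1}\lesssim N^2/M^2$ reflecting the quadratic decay of the interpolating polynomial at grid points adjacent to $\Delta$, rather than primarily through the attenuation weights $c_\omega$ as your bookkeeping suggests, though both effects appear in the full argument of \cite{tv_poon}.
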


So, if $x\in\bbC^N$  is $s$-gradient sparse with a minimum separation of $2/s$. Then, $x$ can be exactly recovered from $2s+1$ Fourier coefficients. However,  random sampling guarantees recovery only with $\ord{s\log N}$ samples. Thus, one can further reduce the number of samples required by choosing the samples in accordance to the underlying sparsity structure.

\paragraph{2. The need to understand more realistic sampling patterns.} Sampling in applications such as  MRI is constrained to sampling along smooth trajectories, such as radial lines, spirals or Cartesian lines \cite{lustig2007sparse,wang2009pseudo}; on the other hand, the majority of results in compressed sensing describe only the effects of pointwise sampling. To our knowledge, the only theoretical result in this direction is by Boyer et al.
 in \cite{boyer2015compressed} where they consider the use of wavelet regularization while sampling along horizontal (or vertical) Cartesian lines in the Fourier domain.

\subsection{This paper's contribution}
 The purpose of this paper is to present a two dimensional version of Theorem \ref{thm:1d}, where we  consider how one can efficiently sample the Fourier transform along Cartesian lines by taking into account the sparsity structure in the gradient of the underlying vector. The Cartesian sampling pattern studied in this paper is one of the sampling patterns which has been empirically studied in the application of compressed sensing to MRI \cite{lustig2007sparse,wang2009pseudo}. Thus, the result of this paper provides further justification and insight into the use of compressed sensing in MRI. The main result of this paper is presented and discussed in Section \ref{sec:main} and its proof is presented in Section \ref{sec:prf}.

\subsection{Related works -- wavelet regularization}
The link between success of dense sampling at low frequencies and the correspondence between such sampling patterns and the underlying sparsity structure has previously been investigated in the context of (orthogonal) wavelet regularization with Fourier sampling by Adcock et al. \cite{adcock2013breaking}. In the context of wavelet regularization, the relevant sparsity structure is the sparsity of the underlying wavelet coefficients within each wavelet scale and the results of \cite{adcock2013breaking} provide a link between the distribution of the Fourier samples and the  wavelet sparsity at each scale, while the result of \cite{gs_l1} demonstrate that one can recover the first $N$ wavelet coefficients of the lowest scales from $\ord{N}$ Fourier coefficients of the lowest frequencies by $\ell^1$ wavelet regularization. Thus, the notion that $\ell^1$ regularization can allow for sampling rates without the $\log$ factor is relevant in greater generality.

 In \cite{boyer2015compressed}, Boyer et al. investigated this structure dependency in the case of wavelet regularization with Cartesian line sampling in the Fourier domain. In particular, they proved that one can guarantee stable error bounds provided that the number of horizontal lines within each block of Fourier coefficients is proportional, up to log factors, with the sparsity in each column of the wavelet transform within the corresponding wavelet scale.

\begin{figure}

\begin{center}
\begin{tabular}{@{\hspace{0pt}}c@{\hspace{3pt}}c@{\hspace{3pt}}c}
\includegraphics[width = 0.2\textwidth]{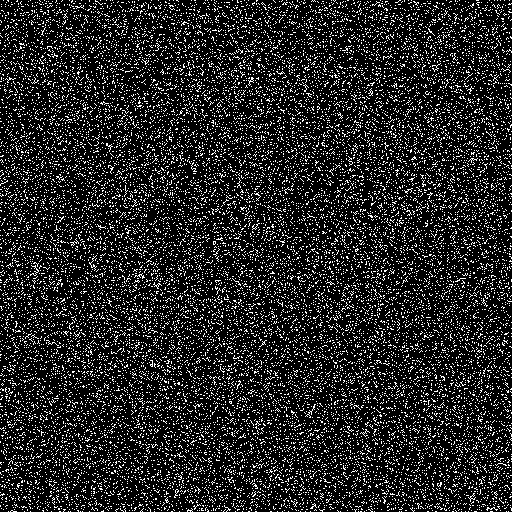}  &
\includegraphics[width = 0.2\textwidth]{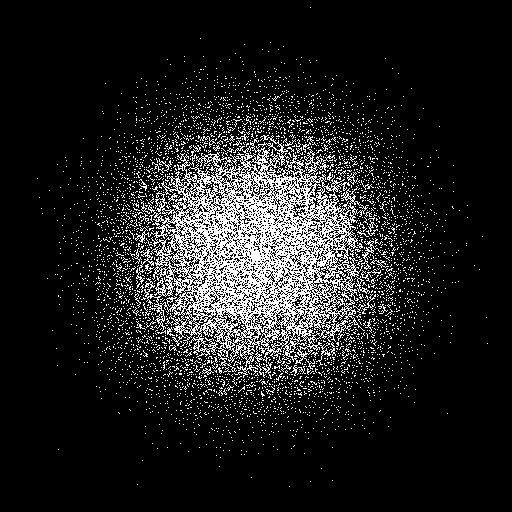}
&
\includegraphics[width = 0.2\textwidth]{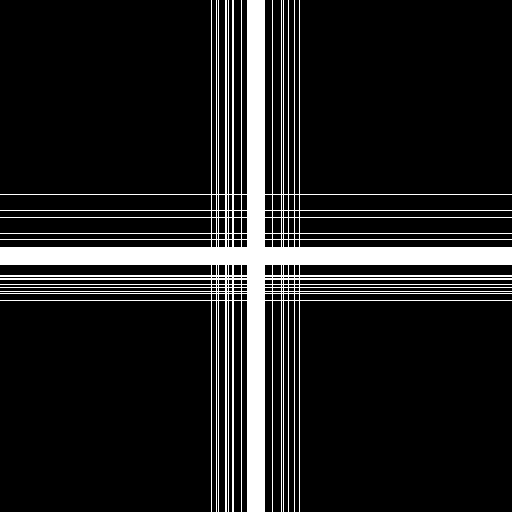} \\
  \includegraphics[width = 0.2\textwidth]{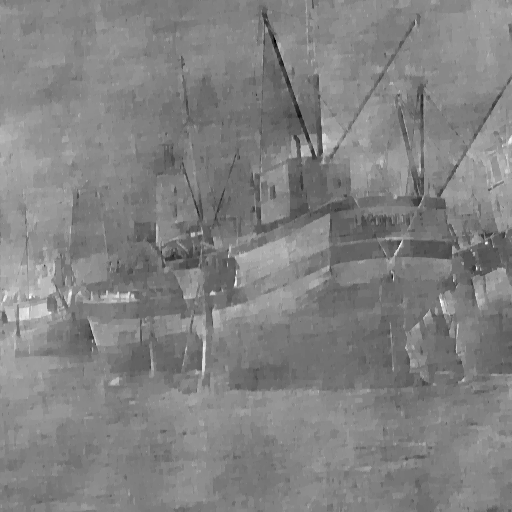} &\includegraphics[width = 0.2\textwidth]{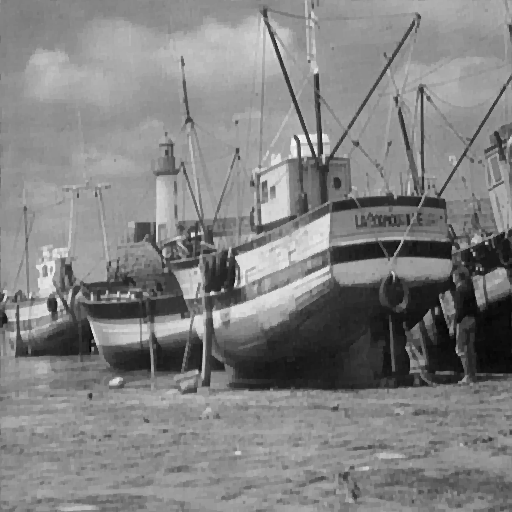}
&\includegraphics[width = 0.2\textwidth]{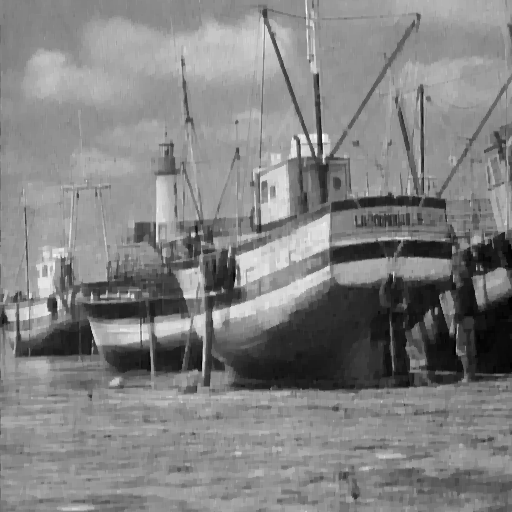}\\
Rel. Err. 28.9\% & Rel. Err. 6.6\% & Rel. Err. 7.6\%
\end{tabular}
\end{center}
\caption{The top row shows three different sampling maps, covering 12.3\% of the Fourier coefficients. Note that the zeroth Fourier frequency corresponds to the centre of each map. The bottom row shows the corresponding reconstructions. \label{fig:boat}}
\end{figure}

\subsection{Notation}\label{sec:notation}

Let $N\in\bbN$, and let $[N] = \br{-\lceil N/2 \rceil +1,\ldots, \lfloor N/2\rfloor}$. Let $A:\bbC^N\to \bbC^N$ be the discrete Fourier transform with
$$
A z = \left(\sum_{j=1}^N z_j e^{-2\pi i k j/N}\right)_{k\in [N]}, \qquad z=(z_j)_{j=1}^N \in\bbC^N.
$$
Let $D:\bbC^N\to \bbC^N$ be the finite differences operator defined for each $z\in\bbC^N$ as $D z = (z_j-z_{j-1})_{j=1}^N$ where $z_{0}:= z_N$.
Given any $\Lambda\subset \bbZ$ and $m\in\bbN$, $\Omega\sim \mathrm{Unif}(\Lambda, m)$ means that $\Omega$ consists of $m$ elements of $\Lambda$ drawn uniformly at random (without replacement).

Given any $z\in\bbC^N$, and any operator $V:\bbC^N\to \bbC^N$, $Vz$ is a column vector whenever $z$ is a column vector and a row vector whenever $z$ is a row vector.

Given any $z\in \bbC^{N\times N}$, let $z^{[\mathrm{col},j]} = (z_{k,j})_{k\in [N]}\in\bbC^N$ denote the $j^{\rth}$ column of $z$ and let  $z^{[\mathrm{row},k]} = (z_{k,j})_{j\in [N]}\in\bbC^N$ denote the $k^{\rth}$ row of $z$.

Let $\tilde A: \bbC^{N\times N}\to \bbC^{N\times N}$ with 
$$
\tilde A z = \left(\sum_{k_1=1}^N \sum_{k_2=1}^N z_{k_1,k_2} e^{-2\pi i (k_1 n_1+k_2n_2)/N}\right)_{n_1,n_2\in [N]}
$$
and let 
$
\tilde D_1 :  \bbC^{N\times N}\to \bbC^{N\times N}$ and $
\tilde D_2 :  \bbC^{N\times N}\to \bbC^{N\times N}$, with
$$
\tilde D_1 z = \left(z_{k,j}-z_{k-1,j}\right)_{k,j=1}^N, \quad \tilde D_2 z = \left(z_{k,j}-z_{k,j-1}\right)_{k,j=1}^N.
$$
Let $\tilde D z= (\tilde D_1 z, \tilde D_2 z)$,  $\norm{\tilde D z}_2 = \sqrt{\norm{\tilde D_1 z}_2^2 + \norm{\tilde D_2 z}_2^2}$, $\norm{\tilde D z}_1 = \norm{\tilde D_1 z}_1 + \norm{\tilde D_2 z}_1$.

Given any $ \Omega \subset \bbZ\times \bbZ$, let $\tilde P_{\Omega}: \bbC^{N\times N}\to \bbC^{N\times N}$ with
$$
\tilde P_\Omega z = y, \quad  y_j = \begin{cases}
z_j & j\in\Omega,\\
0 & j\not \in \Omega.
\end{cases}
$$
Let $\nm{\cdot}_{TV}$ denote the anisotropic total variation norm with
$$
\nm{z}_{TV} := \nm{\tilde D_1 z}_1+ \nm{\tilde D_2 z}_1, \qquad \forall z\in\bbC^{N\times N}
$$
and given $\Delta_1,\Delta_2\subset \br{1,\ldots, N}^2$, let
$$
\nm{z}_{TV, \Delta_1,\Delta_2} := \nm{\tilde P_{\Delta_1}\tilde D_1 z}_1+ \nm{\tilde P_{\Delta_2} \tilde D_2 z}_1, \qquad \forall z\in\bbC^{N\times N}.
$$

Given $a,b\in \bbR$, we write $a\lesssim b$ if there exists some constant $C>0$ (independent of all variables under consideration) such that $a\leq C\cdot b$.
\section{Key concepts}
In Theorem \ref{thm:1d}, the sparsity structure considered is the separation between the discontinuities of the underlying signal. When considering the recovery of some vector $x\in\bbC^{N\times N}$ by sampling along Cartesian lines of its Fourier transform, our main result will demonstrate how one should subsample depends on the sparsity and the minimum separation distance within each column of $\tilde D_1 x$ and each row of $\tilde D_2 x$. We first present three definitions that our main result will depend on.

\begin{definition}[Sparsity]
Let $\Delta \subset \br{1,\ldots, N}^2$. The column cardinality of $\Delta$ is  $$s_1 = \max_{j=1}^N \abs{\br{k: (k,j)\in\Delta}}.$$ The row  cardinality of $\Delta$ is $$s_2 = \max_{k=1}^N \abs{\br{j: (k,j)\in\Delta}}.$$
\end{definition}

\begin{definition}[Minimum separation distance]
Let $N\in\bbN$ and let $\Delta\subset \br{1,\ldots, N}^2$. The minimum separation distance of its rows is defined to be
$$
\nu_{\mathrm{row}}(\Delta, N) = \min_{n=1}^N \min\br{\frac{\abs{j-k}}{N} : (j,n), (k,n) \in \Delta, j\neq k},
$$
and minimum separation distance of its columns is defined to be
$$
\nu_{\mathrm{col}}(\Delta, N) = \min_{n=1}^N \min \br{\frac{\abs{j-k}}{N} : (n,j), (n,k)\in \Delta, j\neq k}.
$$

\end{definition}
\begin{definition}
We say that $x$ has $T_1$ distinct column supports if 
$$
 T_1 = \abs{\br{ (x_{k,j})_{k=1}^N :  j=1,\ldots, N}},
$$
and say that $x$ has $T_2$ distinct row supports if
$$
 T_2 = \abs{\br{ (x_{k,j})_{j=1}^N : k=1,\ldots, N}}.
$$
\end{definition}

\section{Main theorem}\label{sec:main}
Let $x\in\bbC^{N\times N}$ and let $\Delta_1,\Delta_2\subset \br{1,\ldots, N}^2$. Suppose that $\tilde P_{\Delta_1 } \sgn(\tilde D_1 x)$  has $T_1$ distinct column supports with a minimum separation of $2/M_1$ along its columns, and $\tilde P_{\Delta_2} \sgn(\tilde D_2 x)$ has $T_2$ distinct supports with a minimum separation of $2/M_2$ along  its rows. Suppose that $\Delta_1$ has column cardinality $s_1$ and $\Delta_2$ has row cardinality $s_2$.
Assume also that for $i=1,2$,
$$
s_i \log(T_i s_i/\epsilon) \geq \log(T_i M_i/\epsilon) .
$$
\begin{theorem}\label{thm:main}

 Let $\epsilon\in (0,1)$ and let
$
\Omega =\br{0}\cup \br{\Omega_1 \times [N]} \cup \br{[N]\times\Omega_2}$ and let $ m=\abs{\Omega},
$
where
 $$
\Omega_1 \sim \mathrm{Unif}( [M_1], m_1), \qquad m_1  \gtrsim s_1 \log(T_1 s_1/\epsilon)\log(T_1 M_1/\epsilon), $$
 and
 $$ \Omega_2 \sim \mathrm{Unif}( [M_2], m_2), \qquad m_2  \gtrsim  s_2 \log(T_2 s_2/\epsilon)\log(T_2 M_2/\epsilon).
 $$ 
 Let $\xi = P_{\Omega} x + \eta$ with $\nm{\eta}_2\leq \delta\cdot \sqrt{m}$, and suppose that $\hat x$ is a minimizer of
\be{\label{eq:tv_min_2d}
\min_{z\in\bbC^{N\times N}} \nm{z}_{TV} \text{ subject to } \nm{\tilde P_\Omega \tilde A z - \xi}_2\leq \delta \cdot \sqrt{m}.
}
Then, with probability exceeding $1-\epsilon$,
\be{\label{thm:err1}
\nm{\tilde D (x-\hat x)}_2  \lesssim \frac{N^2}{M_0^2}\left((m_0 N)^{-1/2} \sqrt{m} \delta  
+  \nm{x}_{TV, \Delta_1^c, {\Delta_2}^c  }\right),
}
 and
\be{\label{thm:err2}
\nm{x-\hat x}_2 \lesssim \frac{N^2}{M_0^2}\left((m/m_0)^{1/2} \sqrt{s } \delta +  \sqrt{s }   \nm{x}_{TV, {\Delta_1}^c,  {\Delta_2}^c } \right),
}
where
$s=\max\br{s_1,s_2}$, $m_0=\min\br{m_1,m_2}$, and $M_0=\min\br{M_1,M_2}$. 
If $\Omega_1=[M_1]$ and $\Omega_2=[M_2]$, then (\ref{thm:err1}) and (\ref{thm:err2}) hold with probability one.

\end{theorem}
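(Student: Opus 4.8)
The plan is to prove (\ref{thm:err1}) and (\ref{thm:err2}) through the two standard ingredients for an $\ell^1$-analysis recovery bound: an (inexact) dual certificate lying in the range of the adjoint of the measurement operator $\tilde P_\Omega\tilde A$, and a restricted injectivity estimate for that operator on the relevant low--dimensional model. The structural feature that makes this feasible is that the anisotropic total variation seminorm is \emph{separable} across the two coordinate directions, $\nm{z}_{TV}=\nm{\tilde D_1 z}_1+\nm{\tilde D_2 z}_1$, so its subdifferential at $x$ is the set of all $\tilde D_1^\ast v_1+\tilde D_2^\ast v_2$ with each $v_i$ in the $\ell^1$-subdifferential of $\tilde D_i x$. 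A certificate can therefore be assembled as such a sum, where $v_1$ takes care of the columns of $\tilde D_1 x$ and $v_2$ of the rows of $\tilde D_2 x$: one needs $v_i$ to agree with $\sgn(\tilde D_i x)$ on $\Delta_i$, to have $\nm{\tilde P_{\Delta_i^c}v_i}_\infty$ bounded away from $1$, and to have $\tilde D_i^\ast v_i$ in the range of $\tilde A^\ast$ restricted to the appropriate portion of $\Omega$. Inserting such a certificate together with the injectivity estimate into the usual perturbation argument for $h=\hat x-x$ (using $\nm{\hat x}_{TV}\le\nm{x}_{TV}$ and the feasibility of both $x$ and $\hat x$) yields (\ref{thm:err1}); (\ref{thm:err2}) then follows by the customary passage from a bound on $\nm{\tilde D h}_2$ to a bound on $\nm{h}_2$ (a Poincar\'e-type step, with the global mean pinned by the single sample at the zero frequency), combined with an $\ell^1$-to-$\ell^2$ comparison on the model set, which has at most $s_1$ points per column and $s_2$ points per row.

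The second step exploits the Cartesian geometry to collapse these two--dimensional requirements into one--dimensional ones. Writing $\tilde A$ as a column DFT followed by a row DFT, a single horizontal line $\br{n_1}\times[N]$ of $\tilde A x$ is equivalent, as data, to the Fourier coefficient at frequency $n_1$ of every column of $x$; hence $\tilde P_{\Omega_1\times[N]}\tilde A x$ carries exactly the information $\br{P_{\Omega_1}A\,x^{[\mathrm{col},j]}}_{j=1}^N$, and symmetrically $\tilde P_{[N]\times\Omega_2}\tilde A x$ carries $\br{P_{\Omega_2}A\,x^{[\mathrm{row},k]}}_{k=1}^N$. Since $\tilde D_1$ acts within columns and $\tilde D_2$ within rows, the certificate requirement $\tilde D_1^\ast v_1\in\mathrm{range}\big(\tilde A^\ast\tilde P_{\Omega_1\times[N]}\big)$ and the restricted injectivity estimate in direction $1$ both factor into $N$ independent one--dimensional statements --- one per column of $x$, for the signal $x^{[\mathrm{col},j]}$ with gradient support $\br{k:(k,j)\in\Delta_1}$ --- each of exactly the type underlying the proof of Theorem \ref{thm:1d}; likewise for direction $2$ with the rows and $\Delta_2$.

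For each column, the support $\br{k:(k,j)\in\Delta_1}$ has at most $s_1$ points and, by hypothesis, minimum separation at least $2/M_1$, so the one--dimensional dual certificate and injectivity bound behind Theorem \ref{thm:1d} apply with sampling set $\Omega_1\subset[M_1]$ and carry the constant $N^2/M_1^2$. Because the one--dimensional certificate depends on a column only through its sign pattern on $\Delta_1$, the $N$ columns fall into $T_1$ classes; running the construction with failure probability $\epsilon/(2T_1)$ and taking a union bound over the $T_1$ classes lets a single draw $\Omega_1\sim\mathrm{Unif}([M_1],m_1)$ serve all columns simultaneously, and the assumption $s_1\log(T_1 s_1/\epsilon)\ge\log(T_1 M_1/\epsilon)$ is precisely what collapses the maximum in Theorem \ref{thm:1d} (with $\epsilon$ replaced by $\epsilon/(2T_1)$) to the single term $m_1\gtrsim s_1\log(T_1 s_1/\epsilon)\log(T_1 M_1/\epsilon)$ appearing here; the direction--$2$ construction is identical with $\Omega_2,m_2,T_2,M_2$ and the remaining $\epsilon/2$ of failure probability. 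Combining the two directions keeps the constant $N^2/M_0^2$ with $M_0=\min\br{M_1,M_2}$, while $m_0=\min\br{m_1,m_2}$ and $m=\abs{\Omega}$ enter only through the normalisation of the noise level across the sampled entries. When $\Omega_1=[M_1]$ and $\Omega_2=[M_2]$ the entire low--frequency block is available in each direction, and under the minimum--separation hypothesis alone the requisite one--dimensional certificate exists deterministically (the ``$m=2M+1$'' clause of Theorem \ref{thm:1d}); the union bound is then vacuous and both conclusions hold with probability one.

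The step I expect to be the genuine work is the gluing implicit in the previous two paragraphs: one must check that the column certificate $v_1$ built from $\tilde P_{\Omega_1\times[N]}$ and the row certificate $v_2$ built from $\tilde P_{[N]\times\Omega_2}$ really do combine into a single object of the form $\tilde A^\ast\tilde P_\Omega w$ that certifies both directions at once --- that the off--support control $\nm{\tilde P_{\Delta_1^c}v_1}_\infty<1$ is not spoiled by the $\tilde D_2^\ast v_2$ term and conversely, and that the \emph{joint} operator $\tilde P_\Omega\tilde A$, not merely its two one--dimensional restrictions, satisfies the injectivity bound. Separability of $\nm{\cdot}_{TV}$ together with the essentially disjoint (``column--type'' versus ``row--type'') character of the low--frequency information --- their only overlap being the zero frequency, handled separately --- is what makes the cross terms vanish or be absorbed; carrying this through cleanly, and propagating the one--dimensional inexact--sparsity and noise contributions into the combined quantities $\nm{x}_{TV,\Delta_1^c,\Delta_2^c}$ and $(m_0 N)^{-1/2}\sqrt m\,\delta$ with the stated powers of $s$, $m$ and $m_0$, is the bulk of the proof.
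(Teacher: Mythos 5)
Your plan matches the paper's proof essentially point for point: Lemma \ref{lem:fourier_id} collapses the Cartesian-line measurements into per-column/per-row one-dimensional Fourier samples, Proposition \ref{prop:dual} runs the perturbation argument off families of one-dimensional certificates $\rho_j,\tau_j$ and per-column injectivity constants, and the theorem then follows by invoking the one-dimensional certificate construction of \cite{tv_poon} with a union bound over the $T_1$ (resp.\ $T_2$) distinct sign patterns, with the hypothesis $s_i\log(T_is_i/\epsilon)\geq\log(T_iM_i/\epsilon)$ absorbing the $\log^2$ term, exactly as you describe. The one correction worth making is that the ``gluing'' you single out as the expected bulk of the work never happens: because the anisotropic TV norm is a sum of two $\ell^1$ terms and the minimizer inequality $\nm{\tilde D_1\hat x}_1+\nm{\tilde D_2\hat x}_1\leq\nm{\tilde D_1 x}_1+\nm{\tilde D_2 x}_1$ already controls that sum, the paper bounds $\abs{\ip{\tilde P_{\Delta_i}\tilde D_i h}{\sgn(\tilde D_i x)}}$ for each direction separately using only that direction's certificates and samples, so no joint certificate of the form $\tilde D_1^*v_1+\tilde D_2^*v_2$ is ever assembled and the cross terms you worry about simply do not arise.
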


\subsection{Remarks on the main result}

\paragraph{Removal of the $\log N$ factor:} Suppose that $x\in\bbC^{N\times N}$ is such that the support of $\tilde D^{[\mathrm{col}]} x$ consists of $M_1$ lines which have a minimum separation of $1/M_1$ and the support of $\tilde D^{[\mathrm{row}]} x$ consists of $M_2$ lines which have a minimum separation of $1/M_2$. Then, one can perfectly recover $x$ from sampling the Fourier transform of $x$ along $\ord{M_1}$ horizontal lines and $\ord{M_2}$ vertical lines. On the other hand, in this case, the sparsity is $(M_1+M_2)N$ and one is guaranteed exact recovery from sampling uniformly at random when one observes $\ord{(M_1+M_2)N\log N}$ samples. Figure \ref{fig:unif_vs_struct} illustrates this effect by showing the recovery of an image from $1.2\%$ of its Fourier coefficients. The test image can be perfect reconstructed from sampling the low frequency Cartesian lines, on the other hand, sampling 1.2 \% of the Fourier coefficients uniformly at random yields a poor reconstruction.  Note that this image can in fact be recovered from 3\% of its Fourier coefficients drawn uniformly at random, it is simply that one can sample less by considering the sparsity structure of the test image.

\paragraph{The importance of structure dependency:} Note that from Theorem \ref{thm:main}, the range that one should sample from is dependent on the minimum separation between the discontinuities in the corresponding direction, and the number of samples that one should draw is up to $\log$ factors dependent on the maximum sparsity in each row or each column of the corresponding direction. It is important to consider this structure dependency when devising a sampling scheme -- see Figure \ref{fig:struct_matters}.

\begin{figure}

\begin{center}
\begin{tabular}{@{\hspace{0pt}}c@{\hspace{1pt}}c@{\hspace{3pt}}c@{\hspace{1pt}}c}
\includegraphics[width = 0.2\textwidth]{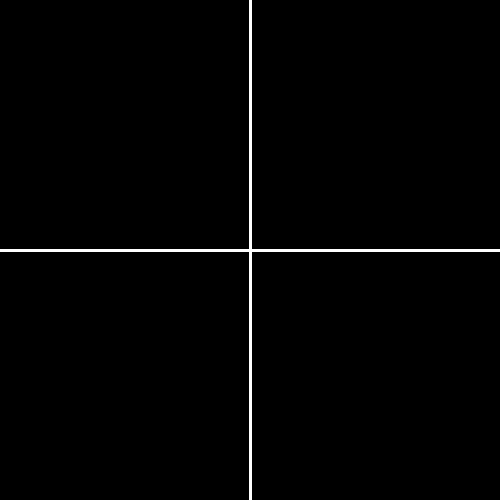}  &
  \includegraphics[width = 0.2\textwidth]{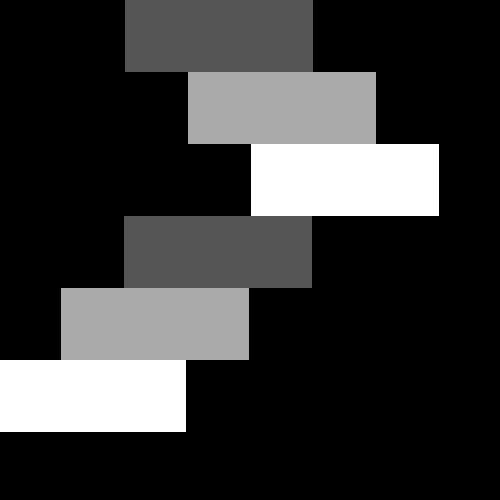} &
\includegraphics[width = 0.2\textwidth]{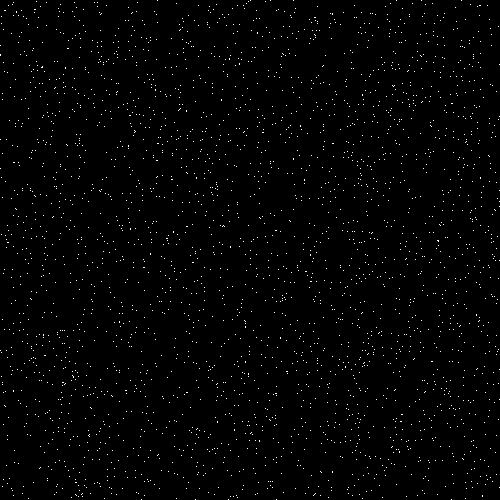}
&
\includegraphics[width = 0.2\textwidth]{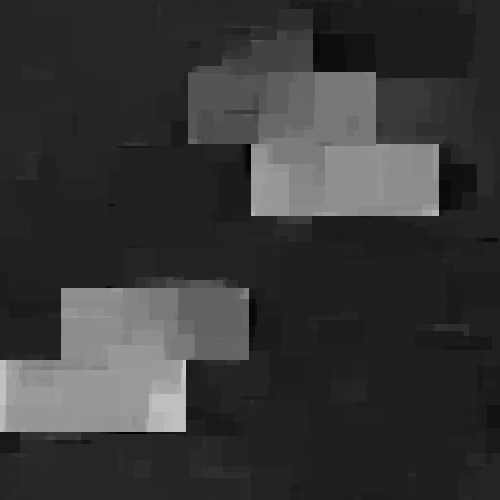} \\
(a) Samp. map &  (a) Recov. & (b)  Samp. map & (b) Recov.\\
& Rel Err. 0\% && Rel. Err. 49.78\%
\end{tabular}
\end{center}
\caption{In (a), the test image (500$\times 500$) can be perfectly recovered from 1.2\% of its Fourier coefficients, indexed by the Cartesian lines passing through the low frequencies. In (b), the reconstruction obtained from sampling 1.2\% of the Fourier coefficients uniformly at random is shown. \label{fig:unif_vs_struct}}
\end{figure}

\begin{figure}

\begin{center}
\begin{tabular}{@{\hspace{0pt}}c@{\hspace{1pt}}c@{\hspace{3pt}}c@{\hspace{1pt}}c}
\includegraphics[width = 0.2\textwidth]{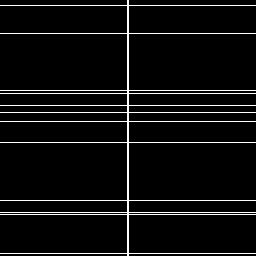}  &
 \includegraphics[width = 0.2\textwidth]{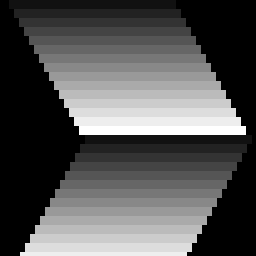} 
&
\includegraphics[width = 0.2\textwidth]{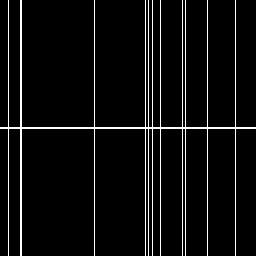}&
\includegraphics[width = 0.2\textwidth]{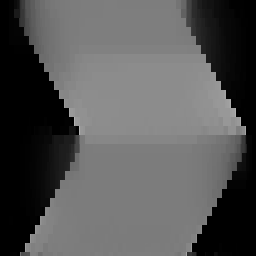} \\
 (a) Samp. map &  (a) Recov. & (b) Samp. map &  (b) Recov.\\
&Rel. Err. 0\% & & Rel. Err. 44.8\%
\end{tabular}
\end{center}
\caption{ This figure shows two sampling maps and their corresponding reconstructions (256$\times$256). Both sampling maps cover 5.4\% of the Fourier coefficients. The sampling map (a) is constructed by sampling along 12 lines uniformly at random in the horizontal direction and along the 2 lowest frequency lines in the vertical direction. The sampling map (b) is obtained in the same manner, but sampled in the opposite orientations with 12 random lines in the vertical direction and 2 of the lowest frequency lines in the horizontal direction.
\label{fig:struct_matters}}
\end{figure}

\section{Proof}\label{sec:prf}
As is now standard in compressed sensing, the  proof of Theorem \ref{thm:main} consists of showing the existence of some dual certificate \cite{candes2006robust,candes2011probabilistic}.  Following the arguments in \cite{tv_poon}, one can show that given $x\in\bbC^{N\times N}$ with $\mathrm{supp}(\tilde D_1 x) = \Delta_1$ and $\mathrm{supp}(\tilde D_2 x) = \Delta_2$,  $x$ is the unique solution of (\ref{eq:tv_min_2d}) if $0\in\Omega$, and the following two conditions hold:
\begin{itemize}
\item[(i)] For $i=1,2$, there exists some dual certificate $\rho_i,\in\mathrm{ran}(\tilde A^* \tilde P_\Omega)$ such that $\norm{\rho_i}_{\infty}$ and $\tilde P_{\Delta_i} \rho_i = \mathrm{sgn}(\tilde D_i x)$,
\item[(ii)] $\tilde P_\Omega \tilde A \tilde P_{\Delta_i}$ is injective for $i=1,2$.
\end{itemize}
However, instead of directly showing the existence of one dual certificate defined on $\bbC^{N\times N}$, we will exploit the fact that we are given Fourier samples along Cartesian lines and apply Proposition \ref{prop:dual} to  show that it suffices to prove the existence a sequence of one dimensional certificates defined on $\bbC^N$.

\begin{lemma}\label{lem:fourier_id}
Let $N\in\bbN$, $\Omega \subset [N]$ and let $z\in\bbC^{N\times N}$.
Then,
$$
\nm{\tilde P_{\Omega \times [N]} \tilde A z}_2^2 = N \sum_{k=1}^N \nm{P_{\Omega }A z^{[\mathrm{col},k]}}_2^2,
\qquad
\nm{\tilde P_{[N]\times \Omega } \tilde A z}_2^2 = N \sum_{k=1}^N \nm{P_{\Omega }A z^{[\mathrm{row},k]}}_2^2.
$$
\end{lemma}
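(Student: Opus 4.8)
The plan is to reduce the claimed two–dimensional identity to a one–dimensional statement about the discrete Fourier transform, column by column (respectively row by row). The key observation is that the $2$D Fourier transform $\tilde A$ tensorises: applying $\tilde A$ to $z\in\bbC^{N\times N}$ can be done by first applying the $1$D transform $A$ to every column of $z$, and then applying $A$ to every row of the resulting matrix. Since $\tilde P_{\Omega\times[N]}$ only restricts the \emph{first} coordinate to $\Omega$ and leaves the second coordinate untouched, I would first carry out the column transforms (introducing the factor that restricts to $\Omega$), and then note that the remaining operation is a full, unrestricted $1$D Fourier transform applied along each row.

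Concretely, write $w = \tilde A z$ and fix $n_2\in[N]$. I would show that the $n_2$-th row of $w$, namely $w^{[\mathrm{row},n_2]}$, equals $A\big(\big(A z^{[\mathrm{col},k}]\big)_{n_2}\big)_{k}$ — i.e. first take each column DFT $A z^{[\mathrm{col},k]}$, read off its $n_2$-th entry, assemble these into a vector indexed by $k$, and apply $A$. Then $\nm{\tilde P_{\Omega\times[N]}\tilde A z}_2^2 = \sum_{n_2\in[N]} \nm{P_\Omega w^{[\mathrm{row},n_2]}}_2^2$, and applying Parseval/Plancherel for the (unitary up to a factor of $\sqrt N$) $1$D DFT $A$ along the second coordinate converts the sum over $n_2$ of squared norms of $P_\Omega\,(\text{row of }w)$ into $N$ times the sum over the frequency index $k$ of $\nm{P_\Omega A z^{[\mathrm{col},k]}}_2^2$. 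The crucial point making this legitimate is that $P_\Omega$ acts on the first coordinate while the Plancherel identity is applied in the second coordinate, so the two operations commute: $\tilde P_{\Omega\times[N]}\tilde A z$ has, as its $n_2$-th row, $A$ applied to the vector $\big(P_\Omega A z^{[\mathrm{col},k]}\big)_{\text{over }n_2}$ — one must only be careful to track that the role of "$k$" switches from a spatial index (in the column DFT) to a frequency index after the row transform, but since we are summing $\nm{\cdot}_2^2$ over all such indices this relabelling is harmless.

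The convention fixed in the notation section — that $Az$ is a column (resp. row) vector when $z$ is a column (resp. row) vector, and that $\tilde A$ acts as stated on the matrix — is exactly what makes the bookkeeping above unambiguous, so I would lean on it rather than re-deriving it. The second identity, $\nm{\tilde P_{[N]\times\Omega}\tilde A z}_2^2 = N\sum_{k}\nm{P_\Omega A z^{[\mathrm{row},k]}}_2^2$, follows by the same argument with the roles of rows and columns exchanged (or, more cheaply, by applying the first identity to the transpose $z^{\top}$ and using that $\tilde A(z^{\top}) = (\tilde A z)^{\top}$ and $\tilde P_{[N]\times\Omega}(w^{\top}) = (\tilde P_{\Omega\times[N]} w)^{\top}$).

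I do not expect any genuine obstacle here; this is essentially a separation-of-variables computation. The only mildly delicate point — and hence the step I would write out most carefully — is the index/relabelling accounting in the middle of the argument: being precise about which coordinate $P_\Omega$ constrains, which coordinate the DFT is applied in at each stage, and the fact that the factor $N$ is produced by a single application of the one–dimensional Plancherel identity $\nm{Av}_2^2 = N\nm{v}_2^2$ along the unrestricted coordinate.
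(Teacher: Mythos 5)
Your proposal is correct and follows essentially the same route as the paper: write $(\tilde A z)_{n_1,n_2}=(A\beta^{[n_1]})_{n_2}$ with $\beta^{[n_1]}=\big((Az^{[\mathrm{col},k]})_{n_1}\big)_{k=1}^N$, apply Plancherel for $N^{-1/2}A$ along the unrestricted coordinate to produce the factor $N$, swap the order of summation, and get the second identity by symmetry. Just fix the one notational slip in your intermediate display, where $\sum_{n_2\in[N]}\nm{P_\Omega w^{[\mathrm{row},n_2]}}_2^2$ under the paper's conventions restricts the \emph{second} coordinate and hence equals $\nm{\tilde P_{[N]\times\Omega}\tilde A z}_2^2$ rather than $\nm{\tilde P_{\Omega\times[N]}\tilde A z}_2^2$; your subsequent sentence describing the $n_2$-th row of $\tilde P_{\Omega\times[N]}\tilde A z$ as $A$ applied to $\big((P_\Omega Az^{[\mathrm{col},k]})_{n_2}\big)_k$ already gives the correct bookkeeping.
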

\prf{
We first note that
\eas{
(\tilde A z)_{n_1,n_2} &= \sum_{k_2=1}^N \left(\sum_{k_1=1}^N z_{k_1,k_2} e^{-2\pi i  k_1 n_1 /N}\right) e^{-2\pi i  k_2 n_2 /N} \\
&=  \sum_{k_2=1}^N (A z^{[\mathrm{col}, k_2]})_{n_1} e^{-2\pi i  k_2 n_2 /N} = (A \beta^{[n_1]} )_{n_2},
}
where $\beta^{[n_1]} = \left(( A z^{[\mathrm{col}, k]})_{n_1}\right)_{k=1}^N$.
 Then, by using this identity, we have that
\spl{\label{eq:p1_bd1}
\nm{\tilde P_{\Omega \times [N]} \tilde A z}_2^2 &= \sum_{n_1\in\Omega }\sum_{n_2\in [N]} \abs{(\tilde A z)_{n_1,n_2}}^2
= \sum_{n_1\in\Omega }\sum_{n_2\in [N]} \abs{(A \beta^{[n_1]})_{n_2}}^2\\
&=\sum_{n_1\in\Omega } \nm{A \beta^{[n_1]}}_2^2 =  N\sum_{n_1\in\Omega}\nm{\beta^{[n_1]}}_2^2
=  N \sum_{k=1}^N \sum_{n_1\in\Omega } \abs{( A z^{[\mathrm{col}, k]})_{n_1}}^2\\
&= N \sum_{k=1}^N \nm{P_{\Omega }A z^{[\mathrm{col},k]}}^2_2,
}
where we have applied in the second line the fact that $N^{-1/2}A$ is unitary. Finally, by a symmetric argument,
$$
\nm{\tilde P_{[N]\times \Omega } \tilde A z}_2^2 = N \sum_{k=1}^N \nm{P_{\Omega }A z^{[\mathrm{row},k]}}_2^2.
$$
}

\begin{proposition}[Dual certificates] \label{prop:dual}
Let $x\in\bbC^{N\times N}$.
Let $\Delta_1,\Delta_2\subset \br{1,\ldots, N}^2$.
 Let $\Omega_1,\Omega_2\subset [N]$ and let $m_1=\abs{\Omega_1}$ and $m_2=\abs{\Omega_2}$. Let
$$
\Omega = \br{\Omega_1 \times [N]} \cup \br{[N]\times\Omega_2}.
$$
Let $m=\abs{\Omega}$. Let $\xi = P_{\Omega} x + \eta$ with $\nm{\eta}_2\leq \delta\cdot \sqrt{m}$, and suppose that $\hat x$ is a minimizer of
\be{\label{eq:prop_min}
\min_{z\in\bbC^{N\times N}} \nm{z}_{TV} \text{ subject to } \nm{\tilde P_\Omega \tilde A z - \xi}_2\leq \delta \cdot \sqrt{m}.
}
For $n=1,2$, let
 $\Delta_{n,j} =  \br{k: (k,j)\in\Delta_n}\subset \br{1,\ldots, N}$ and let $s_n = \max_{j=1}^N \abs{\Delta_{n,j}}$. 
Assume that the following conditions hold.
\begin{itemize}
\item[(i)] For each $j\in \br{1,\ldots, N}$, $$m_1^{-1/2} \inf_{\mathrm{supp}(x) = \Delta_{1,j}, \nm{x}_2 = 1}\nm{P_{\Omega_1} A x}_2  \geq c_1>0.
$$

\item[(ii)]   For each $j\in \br{1,\ldots, N}$, $$m_2^{-1/2} \inf_{\mathrm{supp}(x) = \Delta_{2,j}, \nm{x}_2 = 1} \nm{P_{\Omega_2} A x}_2  \geq c_1>0.
$$

\item[(iii)]  For each $j\in \br{1,\ldots, N}$, there exists $\rho_j = m_1^{-1/2} A^*P_{\Omega_1} w_j\in\bbC^N$ such that
$$
P_{\Delta_{1,j}} \rho_j = P_{\Delta_{1,j}} \sgn(\tilde D_1 x)^{[\mathrm{col},j]}, \qquad \nm{P_{\Delta_{1,j}}^\perp \rho_j}_\infty \leq c_2 <1, \qquad \sum_{j=1}^N\nm{w_j}_2\leq L^2.
$$
\item[(iv)]  For each $j\in\br{1,\ldots, N}$, there exists $\tau_j = m_2^{-1/2} A^*P_{\Omega_2} u_j\in\bbC^N$ such that
$$
P_{\Delta_{2,j}} \tau_j = P_{\Delta_{2,j}} \sgn(\tilde D_2 x)^{[\mathrm{col},j]}, \qquad \nm{P_{\Delta_{2,j}}^\perp \tau_j}_\infty \leq c_2 <1, \qquad \sum_{j=1}^N\nm{u_j}_2\leq L^2.
$$
\end{itemize}
Then,
$$
\nm{\tilde D (x-\hat x)}_2  \lesssim C_1 \cdot \delta \cdot 
+ C_2 \cdot \nm{x}_{TV, {\Delta_1}^c, {\Delta_2}^c  },
$$
 and
$$
\nm{x-\hat x}_2 \lesssim C_3\cdot \delta\cdot \sqrt{\frac{sm}{m_0} }  + C_2\cdot\sqrt{s } \cdot  \nm{x}_{TV, {\Delta_1}^c,  {\Delta_2}^c },
$$
where  
$s=\max\br{s_1,s_2}$, $m_0=\min\br{m_1,m_2}$,  
$$
C_1 =  (m_0 N)^{-1/2} m^{1/2}    (1+ L) (1+ c_1^{-1} )(1-c_2)^{-1}    , \qquad  C_2 =  (1+ c_1^{-1} ) (1-c_2)^{-1},
$$
and
$C_3 =
 c_1^{-1}  
( 1    +  L
(1-c_2)^{-1}      N ^{-1/2}  ).
$

\end{proposition}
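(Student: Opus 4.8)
The plan is to reduce the two-dimensional recovery problem to a family of one-dimensional problems --- one per column for the horizontal-line samples and one per row for the vertical-line samples --- and then to run the standard dual-certificate argument for robust recovery. Writing the discrete Fourier transform as the symmetric matrix $A$, so that $\tilde A z = A z A$, Lemma \ref{lem:fourier_id} says precisely that $\tilde P_{\Omega_1\times[N]}\tilde A$ acts column-by-column as $\sqrt N\,P_{\Omega_1}A$ (up to the unitary row-transform) and $\tilde P_{[N]\times\Omega_2}\tilde A$ acts row-by-row as $\sqrt N\,P_{\Omega_2}A$. Set $h=\hat x-x$. Since $x$ is feasible for (\ref{eq:prop_min}) up to the noise level and $\hat x$ is a minimizer, $\nm{\tilde P_\Omega\tilde A h}_2\leq 2\delta\sqrt m$ and $\nm{\hat x}_{TV}\leq\nm{x}_{TV}$.

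First I would assemble the two-dimensional dual certificates from the one-dimensional ones. From $\tilde A^*\tilde P_{\Omega_1\times[N]}y=A^*(\tilde P_{\Omega_1\times[N]}y)A^*$ one checks that $\mathrm{ran}(\tilde A^*\tilde P_{\Omega_1\times[N]})$ is exactly the set of matrices all of whose columns lie in $\mathrm{ran}(A^*P_{\Omega_1})$; hence the matrix $\rho^{(1)}$ whose $j$-th column is the vector $\rho_j$ of hypothesis (iii) belongs to $\mathrm{ran}(\tilde A^*\tilde P_\Omega)$, satisfies $\tilde P_{\Delta_1}\rho^{(1)}=\tilde P_{\Delta_1}\sgn(\tilde D_1 x)$ and $\nm{\tilde P_{\Delta_1^c}\rho^{(1)}}_\infty\leq c_2$; likewise hypothesis (iv) provides $\rho^{(2)}$ adapted to the vertical-line samples and $\tilde D_2$. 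Expanding $\nm{\hat x}_{TV}\leq\nm{x}_{TV}$ as $\sum_i\nm{\tilde D_i x+\tilde D_i h}_1\leq\sum_i\nm{\tilde D_i x}_1$, splitting each term over $\Delta_i$ and $\Delta_i^c$, and using the interpolation property together with $\nm{\tilde P_{\Delta_i^c}\rho^{(i)}}_\infty\leq c_2$, gives the cone inequality
\[
(1-c_2)\sum_{i=1,2}\nm{\tilde P_{\Delta_i^c}\tilde D_i h}_1 \;\leq\; 2\,\nm{x}_{TV,\Delta_1^c,\Delta_2^c}+\sum_{i=1,2}\abs{\langle\rho^{(i)},\tilde D_i h\rangle}.
\]
To control the last term I would write $\rho_j=m_1^{-1/2}A^*P_{\Omega_1}w_j$, note that $\tilde A\tilde D_1$ equals $\tilde A$ composed with the Fourier multiplier $n_1\mapsto 1-e^{-2\pi i n_1/N}$ of modulus at most $2$, and then apply Lemma \ref{lem:fourier_id}, the Cauchy--Schwarz inequality over $j$, the bound $\sum_j\nm{w_j}_2\leq L^2$, and $\nm{\tilde P_{\Omega_1\times[N]}\tilde A h}_2\leq 2\delta\sqrt m$; this bounds $\abs{\langle\rho^{(1)},\tilde D_1 h\rangle}$ by a constant multiple of $L^2(m_1N)^{-1/2}\sqrt m\,\delta$, and symmetrically for $\rho^{(2)}$.

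The estimate on $\nm{\tilde D(x-\hat x)}_2$ then follows by combining this cone inequality with restricted injectivity: applying hypothesis (i) column-by-column through Lemma \ref{lem:fourier_id} gives $\nm{\tilde P_{\Omega_1\times[N]}\tilde A\tilde P_{\Delta_1}z}_2\geq c_1\sqrt{Nm_1}\,\nm{\tilde P_{\Delta_1}z}_2$ for every $z$, and similarly for the row direction. Applied to $z=\tilde P_{\Delta_1}\tilde D_1 h$, writing $\tilde P_{\Omega_1\times[N]}\tilde A\tilde P_{\Delta_1}\tilde D_1 h=\tilde P_{\Omega_1\times[N]}\tilde A\tilde D_1 h-\tilde P_{\Omega_1\times[N]}\tilde A\tilde P_{\Delta_1^c}\tilde D_1 h$, bounding the first term by $4\delta\sqrt m$ via the multiplier and the second by $\sqrt{Nm_1}\,\nm{\tilde P_{\Delta_1^c}\tilde D_1 h}_1$ (using that each Fourier coefficient of a vector is at most its $\ell^1$ norm, again through Lemma \ref{lem:fourier_id}), and finally feeding in the cone bound, closes the argument and produces the constants $C_1$ and $C_2$. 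The step I expect to be the main obstacle is the bound on $\nm{x-\hat x}_2$ itself, since $\tilde D$ has a nontrivial kernel: one must reconstruct $h$ from $\tilde D h$ together with the sampled low frequencies. Following the one-dimensional analysis of \cite{tv_poon}, the idea is that, up to the now-controlled tail, each column of $h$ is piecewise constant with at most $s_1$ jumps and hence lies in a subspace of dimension $\lesssim s_1$ on which the measurement operator inherits a lower bound from hypothesis (i) once the finite-difference multiplier is taken into account, while the constant components are pinned down by the sampled zero frequency; summing the resulting per-column and per-row bounds by Cauchy--Schwarz then introduces the factors $\sqrt s$, $\sqrt{m/m_0}$ and $N^{-1/2}$ in $C_3$. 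Tracking every constant through these steps to land precisely on $C_1,C_2,C_3$ is the remaining bookkeeping.
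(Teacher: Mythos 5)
Your derivation of the first bound, on $\nm{\tilde D(x-\hat x)}_2$, follows the paper's proof essentially step for step: the column-wise reduction via Lemma \ref{lem:fourier_id}, the restricted-injectivity estimate from hypotheses (i)--(ii) with the $\ell^1$ control of the off-support part, the cone inequality from minimality, and the bound on the certificate pairing via the Fourier multiplier $1-e^{-2\pi i n/N}$, Cauchy--Schwarz over $j$, and $\sum_j\nm{w_j}_2\le L^2$. That part is fine (the only quibble is whether $\sqrt{\sum_j\nm{w_j}_2^2}$ should be read as $\le L$ or $\le L^2$, a bookkeeping ambiguity already present in the paper's own constants).

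The genuine gap is in the second bound, on $\nm{x-\hat x}_2$. You propose to invert $\tilde D$ on $h=\hat x - x$ by arguing that ``up to the now-controlled tail, each column of $h$ is piecewise constant with at most $s_1$ jumps and hence lies in a subspace of dimension $\lesssim s_1$.'' This premise is false: nothing constrains $\mathrm{supp}(\tilde D_1 h)$ to $\Delta_1$, and the tail $\tilde P_{\Delta_1}^\perp \tilde D_1 h$ is controlled only in $\ell^1$, not set to zero, so the columns of $h$ do not lie in any fixed low-dimensional subspace and hypothesis (i) cannot be applied to $h$ itself. You never explain how the $\ell^1$ control of the gradient tail is converted into an $\ell^2$ bound on $h$, which is exactly the crux. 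The paper's route is different and simpler: it first bounds $\nm{h}_{TV}=\nm{\tilde D_1 h}_1+\nm{\tilde D_2 h}_1$ directly, by writing $\nm{\tilde P_{\Delta_1}\tilde D_1 h}_1\le\sqrt{s_1}\sum_j\nm{P_{\Delta_{1,j}}(\tilde D_1 h)^{[\mathrm{col},j]}}_2$ via Cauchy--Schwarz and then reusing hypothesis (i) and the already-established cone bound (this is where the factors $\sqrt{s}$ and $\sqrt{m/m_0}$ enter); it then passes from $\nm{h}_{TV}$ to $\nm{h}_2$ by the discrete Poincar\'e inequality $\norm{X}_2\le\norm{X}_{TV}$ for zero-mean $X$, applied to $h$ minus its mean, the mean itself being pinned down by the sampled zero frequency. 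The Poincar\'e inequality is the missing ingredient in your sketch; without it (or an equivalent device) the step from gradient control to $\ell^2$ control of $h$ does not close. Note also that this step requires $0\in\Omega$, which you correctly sense is needed but which must be imposed (as it is in Theorem \ref{thm:main}) rather than derived.
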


\begin{proof}[Proof of Proposition \ref{prop:dual}]

First, suppose that $z\in\bbC^{N\times N}$ is such that $ \nm{\tilde P_{\Omega_1\times [N]}\tilde A z}_2 \leq \sqrt{m}\delta$.
By applying assumption (i) and Lemma \ref{lem:fourier_id},
\eas{&
c_1^2\nm{\tilde P_{\Delta_1} z}_2^2 = c_1^2 \sum_{j=1}^N \nm{P_{\Delta_{1,j}} z^{[\mathrm{col},j]}}_2^2  \leq \sum_{j=1}^N m_1^{-1} \nm{P_{\Omega_1} A P_{\Delta_{1,j}} z^{[\mathrm{col},j]}}_2^2 \\
&= (m_1 N)^{-1} \nm{\tilde P_{\Omega_1\times [N]}\tilde A \tilde P_{\Delta_1} z}_2^2 
 \leq (m_1 N)^{-1} \left( \nm{\tilde P_{\Omega_1\times [N]}\tilde A z}_2 + \nm{\tilde P_{\Omega_1\times [N]}\tilde A \tilde P_{\Delta_1}^\perp z  }_2 \right)^2\\
&\leq  (m_1 N)^{-1} \left( \sqrt{m}\delta + \max_{j\not \in \Delta_1}\nm{\tilde P_{\Omega_1\times [N]}\tilde A e_j}_2 \nm{\tilde P_{\Delta_1}^\perp z  }_1 \right)^2.
}
Note that
\eas{
&(m_1 N)^{-1/2}\max_{j\not \in \Delta_1}\nm{\tilde P_{\Omega_1\times [N]}\tilde A e_j}_2 =1.
}
It thus follows that 
\be{\label{p1eq:1}
\nm{\tilde P_{\Delta_1} z}_2 \leq \frac{\sqrt{m}\delta}{c_1 \sqrt{N m_1}} + \frac{  \nm{\tilde P_{\Delta_1}^\perp z  }_1}{c_1}.
}
Similarly, it follows from assumption (ii) that for any  $z\in\bbC^{N\times N}$ with $ \nm{\tilde P_{ [N]\times \Omega_2}\tilde A z}_2 \leq \sqrt{m}\delta$,
\be{\label{p1eq:2}
 \nm{\tilde P_{\Delta_2} z}_2 \leq  \frac{\sqrt{m}\delta}{c_1\sqrt{N m_2}} + \frac{ \nm{\tilde P_{\Delta_2}^\perp z  }_1}{c_1}.
}

Let $h=\hat x -x$, and observe that since $\hat x$ and $x$ both satisfy the constraint of \eqref{eq:prop_min}, $$\nm{\tilde P_{\Omega_1\times [N]} \tilde A h}_2 \leq \nm{\tilde P_{\Omega_1\times [N]} \tilde A x - \xi}_2+ \nm{\tilde P_{\Omega_1\times [N]}\tilde A  \hat x - \xi}_2\leq 2\sqrt{m}\delta.$$ Note also that
$$
(\tilde A \tilde D_1 h)_{k,j} = (1-e^{-2\pi i k/N}) (\tilde A h)_{k,j}, \qquad (\tilde A \tilde D_2 h)_{k,j} = (1-e^{-2\pi i j/N}) (\tilde A h)_{k,j}.
$$
Therefore, $\nm{\tilde P_{\Omega_1\times [N]} \tilde A \tilde D_1 h}_2 \leq 2\nm{\tilde P_{\Omega \times [N]} \tilde A h}_2 \leq 4\delta\sqrt{m}$.
Similarly,
$\nm{\tilde P_{[N]\times \Omega_2}\tilde A \tilde D_2 h}_2 \leq 4\delta\sqrt{m}$.
 So, we can apply the bounds (\ref{p1eq:1}) and (\ref{p1eq:2}) to obtain
\spl{\label{p1eq:3}
\nm{\tilde P_{\Delta_1} \tilde D_1 h}_2 \leq \frac{4\sqrt{m}\delta}{c_1 \sqrt{N m_1}} + \frac{ 4 \nm{\tilde P_{\Delta_1}^\perp \tilde D_1 h  }_1}{c_1},\\
 \nm{\tilde P_{\Delta_2 } \tilde D_2 h}_2 \leq  \frac{4\sqrt{m}\delta}{c_1\sqrt{N m_2}} + \frac{4 \nm{\tilde P_{\Delta_2 }^\perp \tilde D_2 h  }_1}{c_1}.
}

We now proceed to derive upper bounds for $\nm{\tilde P_{\Delta_1}^\perp \tilde D_1 h  }_1$ and $ \nm{\tilde P_{\Delta_2}^\perp \tilde D_2 h  }_1$: By H\"{o}lder's inequality and the triangle inequality,
\eas{
&\nm{\tilde D_1 \hat x}_1 = \nm{P_{\Delta_1}\tilde D_1 (x+h)}_1 + \nm{P_{\Delta_1}^\perp \tilde D_1 (x+h)}_1\\
&\geq \nm{P_{\Delta_1}\tilde D_1 x}_1 + \Re \ip{P_{\Delta_1}\tilde D_1 h}{\sgn(\tilde D_1 x)}
+ \nm{P_{\Delta_1}^\perp \tilde D_1 h}_1 - \nm{P_{\Delta_1}^\perp \tilde D_1 x}_1.
}
Rearranging the terms yields
$$
\nm{P_{\Delta_1}^\perp \tilde D_1 h}_1  \leq \nm{\tilde D_1 \hat x}_1 - \nm{\tilde D_1 x}_1 +2\nm{P_{\Delta_1}^\perp \tilde D_1 x}_1 + \abs{\ip{P_{\Delta_1}\tilde D_1 h}{\sgn(\tilde D_1 x)}}.
$$
Similarly,
$$
\nm{P_{\Delta_2}^\perp \tilde D_2 h}_1  \leq \nm{\tilde D_2\hat x}_1 - \nm{\tilde D_2 x}_1 +2\nm{P_{\Delta_2}^\perp \tilde D_2 x}_1 + \abs{\ip{P_{\Delta_2}\tilde D_2  h}{\sgn(\tilde D_2 x)}}.
$$
Now, since $\hat x$ is a minimizer of \eqref{eq:prop_min}, 
$$
\nm{\tilde D_1 \hat x}_1+ \nm{\tilde D_2 \hat x}_1 \leq  \nm{\tilde D_1 x}_1 + \nm{\tilde D_2 x}_1.
$$
So,
\spl{\label{p1eq:4}
&\nm{P_{\Delta_1}^\perp \tilde D_1 h}_1 + \nm{P_{\Delta_2 }^\perp \tilde D_2 h}_1   \\&\leq
2 \nm{x}_{TV, \Delta_1^c, \Delta_2^c  } + \abs{\ip{P_{\Delta_1}\tilde D_1 h}{\sgn(\tilde D_1 x)}} +  \abs{\ip{P_{\Delta_2 }\tilde D_2  h}{\sgn(\tilde D_2 x)}}.
}
We now proceed to bound $\abs{\ip{P_{\Delta_1}\tilde D_1  h}{\sgn(\tilde D_1 x)}}$. 
Let $y = \tilde D_1 x$ and let $z = \tilde D_1 h$. By using the existence of $\rho_j =  m_1^{-1/2} A^*P_{\Omega_1} w_j\in\bbC^N$ for $j=1,\ldots, N$ (from assumption (iii)), we have the following bound.
\eas{
&\abs{\ip{\tilde P_{\Delta_1}z}{\sgn(y)}} 
= \abs{\sum_{j=1}^N \ip{P_{\Delta_{1,j}} z^{[\mathrm{col},j]}}{ \sgn(y)^{[\mathrm{col},j]}}  }\\
&= \abs{\sum_{j=1}^N \ip{P_{\Delta_{1,j}} z^{[\mathrm{col},j]}}{ \sgn(y)^{[\mathrm{col},j]} - \rho_j} + \ip{z^{[\mathrm{col},j]}}{\rho_j} - \ip{P_{\Delta_{1,j}}^\perp z^{[\mathrm{col},j]}}{\rho_j}  }\\
&\leq   \abs{ \sum_{j=1}^N \ip{ m_1^{-1/2} P_{\Omega_1} A z^{[\mathrm{col},j]}}{w_j}} +  \sum_{j=1}^N \nm{P_{\Delta_{1,j}}^\perp z^{[\mathrm{col},j]}}_1 \nm{P_{\Delta_{1,j}}^\perp \rho_j}_\infty  \\
&\leq \sqrt{\sum_{j=1}^N m_1^{-1} \nm{P_{\Omega_1} A z^{[\mathrm{col},j]}}_2^2}\sqrt{ \sum_{j=1}^N \nm{w_j}_2^2} + c_2 \nm{\tilde P_{\Delta_1}^\perp z}_1,
}
where we have applied the Cauchy-Schwarz inequality to obtain the last line.
Recall from Lemma \ref{lem:fourier_id} that 
\eas{
&\sqrt{\sum_{j=1}^N m_1^{-1}\nm{P_{\Omega_1} A z^{[\mathrm{col},j]}}_2^2} = (m_1 N)^{-1/2} \nm{\tilde P_{\Omega_1\times [N]}\tilde A z}_2 \\
&=  (m_1 N)^{-1/2} \nm{\tilde P_{\Omega_1\times [N]}\tilde A D h}_2\leq 4 (m_1 N)^{-1/2} m^{1/2} \delta.
}
Hence it follows that
$$
\abs{\ip{\tilde P_{\Delta_1} \tilde D_1 h}{\sgn(\tilde D_1 x)}} \leq 4 L (m_1 N)^{-1/2} m^{1/2} \delta + c_2 \nm{\tilde P_{\Delta_1}^\perp \tilde D_1 h}_1.
$$
A similar argument also yields
$$
\abs{\ip{\tilde P_{\Delta_2}\tilde D_2 h}{\sgn(\tilde D_2 x)}} \leq 4 L (m_2 N)^{-1/2} m^{1/2} \delta + c_2 \nm{\tilde P_{\Delta_2}^\perp \tilde D_2 h}_1.
$$
By plugging these two estimates back into (\ref{p1eq:4}) and rearranging, we have that
\spl{\label{p1eq:5}
&\nm{\tilde P_{\Delta_1}^\perp \tilde D_1 h}_1 + \nm{\tilde P_{\Delta_2}^\perp \tilde D_2 h}_1
\\
&\leq (1-c_2)^{-1} \left( 4 L N^{-1/2} m^{1/2} (m_1^{-1/2}+m_2^{-1/2}) \delta + \nm{x}_{TV, \Delta_1^c,  \Delta_2^c } \right).
}
Combining this estimate with (\ref{p1eq:3}) yields the required bound on $\nm{\tilde D (\hat x - x)}_2$. 

To derive the bound on $\nm{\hat x-x}_{TV}$, first let $z= \tilde D_1 (\hat x - x)$. Note that by the Cauchy-Schwarz inequality,
\eas{
&\nm{\tilde P_{\Delta_1} z}_1 =
\sum_{j=1}^N \nm{ P_{\Delta_{1,j}} z^{[\mathrm{col},j]}}_1 \leq \sqrt{s_1} \sum_{j=1}^N \nm{ P_{\Delta_{1,j}} z^{[\mathrm{col},j]}}_2}
By applying condition (i),  $\nm{\tilde P_{\Delta_1} z}_1 $ is upper bounded by
\eas{
&
  \frac{\sqrt{s_1}}{c_1 \sqrt{m_1}} \sum_{j=1}^N \nm{P_{\Omega_1} A  P_{\Delta_{1,j}} z^{[\mathrm{col},j]}}_2 \leq  \frac{\sqrt{s_1}}{c_1 \sqrt{m_1}} \sum_{j=1}^N \left(\nm{P_{\Omega_1} A   z^{[\mathrm{col},j]}}_2 + \nm{P_{\Omega_1} A   P_{\Delta_{1,j}}^\perp z^{[\mathrm{col},j]}}_2 \right)\\
&\leq \frac{\sqrt{s_1}}{c_1 \sqrt{m_1}} \sqrt{N}\sqrt{\sum_{j=1}^N \nm{P_{\Omega_1} A   z^{[\mathrm{col},j]}}_2^2} + \frac{\sqrt{s_1}}{c_1 \sqrt{m_1}}\max_{j=1}^N\max_{l\not\in \Delta_{1,j}}\nm{P_{\Omega_1} A  e_l}_2 \sum_{j=1}^N \nm{ P_{\Delta_{1,j}}^\perp z^{[\mathrm{col},j]}}_1\\
&\leq \frac{\sqrt{s_1}}{c_1 \sqrt{m_1}} \nm{\tilde P_{\Omega_1\times [N]}\tilde A z}_2   + \frac{\sqrt{s_1}}{c_1  }  \nm{ \tilde P_{\Delta_1}^\perp z }_1
\leq  \frac{4\delta \sqrt{m}\sqrt{s_1}}{c_1 \sqrt{m_1}} + \frac{\sqrt{s_1}}{c_1}  \nm{ \tilde P_{\Delta_1 }^\perp z }_1.
}
By a symmetric argument,
$$
\nm{\tilde P_{\Delta_2}z}_1 \leq \frac{4\delta \sqrt{m}\sqrt{s_2}}{c_1 \sqrt{m_2}} + \frac{\sqrt{s_2}}{c_1}  \nm{ \tilde P_{\Delta_2}^\perp z }_1
$$
Therefore, by combining with the bound from (\ref{p1eq:5}),
$$
\nm{\hat x - x}_{TV} \lesssim C \frac{\sqrt{s m} }{\sqrt{m_0}}\delta + \left(1+\frac{\sqrt{s }}{c_1}\right)(1-c_2)^{-1} \nm{x}_{TV, {\Delta_1}^c,  {\Delta_2}^c }  ,
$$
where $s=\max\br{s_1,s_2}$, $m_0=\min\br{m_1,m_2}$, and
$C= c_1^{-1}  
( 1    +  L
(1-c_2)^{-1}      N ^{-1/2}  ).
$
Finally, recall that due to the Poincar\'{e} inequality, any zero mean image $X\in \bbC^{N\times N}$ satisfies
\be{\label{poinc}
\norm{X}_2 \leq \norm{X}_{TV}.
}
Since $0\in\Omega$, $\abs{\sum_j (x-\hat x)_j}\leq 2\delta\sqrt{m}$, so, by letting $X_j = (\hat x - x)_j - \frac{1}{N^2}\sum_j (x-\hat x)_j$ for each $j\in\br{1,\ldots,N}^2$,  (\ref{poinc}) and  the triangle inequality yields 
$$
\norm{\hat x - x}_2 \leq \delta + \nm{\hat x - x}_{TV},
$$
and hence, the conclusion follows.

\end{proof}

\begin{proof}[Proof of Theorem \ref{thm:main}]
To prove this theorem, we simply need to show that conditions (i) to (iv) of Proposition \ref{prop:dual} hold with high probability.
Note that these conditions  were studied in \cite{tv_poon}: we recall from Lemmas 4.25 of \cite{tv_poon}  that given any $\Delta\subset \br{1,\ldots, N}$  with a minimum separation distance of $1/M$, if $\Omega\subset [M]$ consists $m$ indices chosen uniformly at random with 
$$
m \gtrsim \max\br{\log^2(M/\epsilon), \abs{\Delta}\log(\abs{\Delta}/\epsilon)\log(M/\epsilon)},
$$
 then the following hold with probability exceeding $1-\epsilon$.
\begin{enumerate}
\item For all $x\in\bbC^N$, $$m^{-1/2}\nm{P_{\Omega } A P_{\Delta } x}_2  \geq \frac{3}{2\sqrt{5}}  \nm{P_{\Delta } x}_2 ,
$$
\item There exists $\rho = m^{-1/2} A^* P_\Omega w$ with $\nm{w} \lesssim \sqrt{\abs{\Delta}}$ such that
$$
P_\Delta \rho = x_0, \qquad \nm{P_\Delta^\perp \rho}_\infty \leq c(M) 
$$
where
$$
c(M) := \max\br{0.99993, 1-\frac{0.92(M^2-1)}{N^2}}.
$$
\end{enumerate}
Furthermore, these two conditions hold with probability 1 if $\Omega = [M_1]$.

Therefore, by applying the above fact $T_1$ times and applying the union bound, conditions (i) and (iii) of Proposition \ref{prop:dual} (with $c_1^{-1} = 2\sqrt{5}/3$,  $c_2 = c(M_1)$ and $L\lesssim\sqrt{s_1}$ ) hold with probability exceeding $1-T_1\epsilon$ provided that $\Omega_1$ is chosen uniformly at random with
\be{\label{Om1_c}
\Omega_1\subset [M_1], \qquad \abs{\Omega_1} \gtrsim \max\br{\log^2(M_1/\epsilon),  s_1 \log(s_1/\epsilon)\log(M_1/\epsilon)},
}
and they hold with probability 1 if $\Omega_1 = [M_1]$. Similarly, conditions (ii) and (iv) of Proposition \ref{prop:dual} (with $c_1^{-1} = 2\sqrt{5}/3$,  $c_2 = c(M_2)$ and $L\lesssim\sqrt{s_2}$ ) hold  with probability exceeding $1-T_2\epsilon$ if $\Omega_2$ is chosen uniformly at random with
\be{\label{Om2_c}
\Omega_2\subset [M_2], \qquad \abs{\Omega_2} \gtrsim \max\br{\log^2(M_2/\epsilon),   s_1 \log(s_2/\epsilon)\log(M_2/\epsilon)},
}
and they hold with probability 1 if $\Omega_2 = [M_2]$.
So, by applying the union bound once more, conditions (i) to (iv) with   $c_1^{-1} = 2\sqrt{5}/3$,  $c_2 = \max\br{c(M_1), c(M_2)}$ and $L\lesssim \max\br{\sqrt{s_1},\sqrt{s_2}} $ are satisfied with probability exceeding $1-T_1\epsilon-T_2\epsilon$ provided that $\Omega_1$ and $\Omega_2$ are chosen uniformly at random such that (\ref{Om1_c}) and (\ref{Om2_c}) hold.
\end{proof}

\section{Conclusion}
In this paper, we have derived recovery guarantees for total variation regularized solutions when given partial measurements of the Fourier transform taken along Cartesian lines. In particular, we established a link between the sparsity structure and the sampling pattern by proving that the number of Cartesian lines required for accurate recovery is dependent not only on the gradient sparsity  of the underlying vector, but also on the separation distance between the discontinuities.

\section{Acknowledgements}
The author acknowledges support from the Fondation Science Mathematique de Paris. 

 \addcontentsline{toc}{section}{References}
\bibliographystyle{abbrv}
\bibliography{References}

\begin{thebibliography}{10}

\bibitem{adcock2013breaking}
B.~Adcock, A.~C. Hansen, C.~Poon, and B.~Roman.
\newblock Breaking the coherence barrier: A new theory for compressed sensing.
\newblock {\em arXiv preprint arXiv:1302.0561}, 2013.

\bibitem{boyer2015compressed}
C.~Boyer, J.~Bigot, and P.~Weiss.
\newblock Compressed sensing with structured sparsity and structured
  acquisition.
\newblock {\em arXiv preprint arXiv:1505.01619}, 2015.

\bibitem{candes2014towards}
E.~J. Cand{\`e}s and C.~Fernandez-Granda.
\newblock Towards a mathematical theory of super-resolution.
\newblock {\em Communications on Pure and Applied Mathematics}, 67(6):906--956,
  2014.

\bibitem{candes2011probabilistic}
E.~J. Candes and Y.~Plan.
\newblock A probabilistic and ripless theory of compressed sensing.
\newblock {\em Information Theory, IEEE Transactions on}, 57(11):7235--7254,
  2011.

\bibitem{candes2006robust}
E.~J. Cand{\`e}s, J.~Romberg, and T.~Tao.
\newblock Robust uncertainty principles: Exact signal reconstruction from
  highly incomplete frequency information.
\newblock {\em Information Theory, IEEE Transactions on}, 52(2):489--509, 2006.

\bibitem{cong2012differential}
W.~Cong, J.~Yang, and G.~Wang.
\newblock Differential phase-contrast interior tomography.
\newblock {\em Physics in medicine and biology}, 57(10):2905, 2012.

\bibitem{gonzalez2014compressive}
A.~Gonzalez, L.~Jacques, C.~De~Vleeschouwer, and P.~Antoine.
\newblock Compressive optical deflectometric tomography: A constrained
  total-variation minimization approach.
\newblock {\em Inverse Problems \& Imaging}, 8(2), 2014.

\bibitem{krahmer2014stable}
F.~Krahmer and R.~Ward.
\newblock Stable and robust sampling strategies for compressive imaging.
\newblock {\em Image Processing, IEEE Transactions on}, 23(2):612--622, 2014.

\bibitem{leary2013compressed}
R.~Leary, Z.~Saghi, P.~A. Midgley, and D.~J. Holland.
\newblock Compressed sensing electron tomography.
\newblock {\em Ultramicroscopy}, 131:70--91, 2013.

\bibitem{lustig2007sparse}
M.~Lustig, D.~Donoho, and J.~M. Pauly.
\newblock Sparse mri: The application of compressed sensing for rapid mr
  imaging.
\newblock {\em Magnetic resonance in medicine}, 58(6):1182--1195, 2007.

\bibitem{Lustig}
M.~Lustig, D.~L. Donoho, J.~M. Santos, and J.~M. Pauly.
\newblock {Compressed Sensing MRI}.
\newblock {\em IEEE Signal Process. Mag.}, 25(2):72--82, March 2008.

\bibitem{gs_l1}
C.~Poon.
\newblock A consistent and stable approach to generalized sampling.
\newblock {\em Journal of Fourier Analysis and Applications}, pages 1--35,
  2014.

\bibitem{tv_poon}
C.~Poon.
\newblock On the role of total variation in compressed sensing.
\newblock {\em SIAM Journal on Imaging Sciences}, 8(1):682--720, 2015.

\bibitem{tang2012compressive}
G.~Tang, B.~N. Bhaskar, P.~Shah, and B.~Recht.
\newblock Compressive sensing off the grid.
\newblock In {\em Communication, Control, and Computing (Allerton), 2012 50th
  Annual Allerton Conference on}, pages 778--785. IEEE, 2012.

\bibitem{wang2009pseudo}
H.~Wang, D.~Liang, and L.~Ying.
\newblock Pseudo 2d random sampling for compressed sensing mri.
\newblock In {\em Engineering in Medicine and Biology Society, 2009. EMBC 2009.
  Annual International Conference of the IEEE}, pages 2672--2675. IEEE, 2009.

\bibitem{wiaux2009compressed}
Y.~Wiaux, L.~Jacques, G.~Puy, A.~Scaife, and P.~Vandergheynst.
\newblock Compressed sensing imaging techniques for radio interferometry.
\newblock {\em Monthly Notices of the Royal Astronomical Society},
  395(3):1733--1742, 2009.

\end{thebibliography}

\end{document}